\title{Two-tier blockchain timestamped notarization with incremental security}
\author{Alessio Meneghetti\inst{1}\and
Armanda Ottaviano Quintavalle\inst{2} \and
Massimiliano Sala\inst{1}\and
Alessandro Tomasi\inst{3}}
\institute{Department of Mathematics, University of Trento, Trento, Italy \and
LDSD, Department of Physics and Astronomy, University of Sheffield, Sheffield, UK \and
Security and Trust, Fondazione Bruno Kessler, Trento, Italy
}
\authorrunning{Meneghetti et al.}
\titlerunning{Blockchain timestamped notarization}
\newtheorem{theorem}{Theorem}[section]
\newcommand{\cat}{~{ \Vert}~}
\newcommand{\hash}[1]{h\left(#1\right)}
\newcommand{\sign}[2]{\sigma_{#1}\left({#2}\right)}
\newcommand{\Tree}{\mathcal{T}}
\newcommand{\Rho}{\mathrm{P}}
\newcommand{\Beta}{\mathrm{B}}
\newtheorem{algorithm}{Algorithm}
\begin{document}

\maketitle    
\begin{abstract}
Digital notarization is one of the most promising services offered by modern blockchain-based solutions. We present a digital notary design with incremental security and cost reduced with respect to current solutions.

A client of the service receives evidence in three steps. In the first
step, evidence is received almost immediately, but a lot of trust is
required. In the second step,
less trust is required, but evidence is received seconds later. Finally, in the third step evidence
is received within minutes via a public blockchain.
\end{abstract}

\section{Introduction}

The solution here described was commissioned to provide a customer in the financial sector with evidence to corroborate its statement of the integrity, authenticity, and existence at a given time of its data. This is closely related to the problems known as \textit{secure timestamping} and \textit{notarization}. 
The commission was made with the very specific requirement that it should make use of a privately run blockchain-based service anchored to a public blockchain, but at the same time still be capable of working without the private ledger, if it should cease to operate. We find it interesting to discuss how this design challenge can be met, and what security guarantees it can offer.

Digital timestamping and blockchain have been linked from inception. The bitcoin whitepaper \cite{Nakamoto_08} explicitly cites the linked timestamping work by Haber and Stornetta \cite{Haber_Stornetta_91} with Merkle trees \cite{Merkle_87} as efficiency improvement \cite{Bayer_Haber_Stornetta_91}.

Other blockchain-based timestamping solutions have been recently proposed. The authors of \cite{Gipp_Meuschke_Germandt_15} propose to commit aggregate data hashes to a bitcoin transaction. On the other hand, the authors of \cite{Yuefei_Gao_17} report on existing solutions that use the data hash as a bitcoin address to which to spend a transaction (BTProof, now unavailable), or embed a custom string and a data hash in the transaction's data field (Proof of Existence \cite{proof_of_existence}); the authors themselves extend the former by using three addresses per transaction: one encoding the name, one encoding metadata, and one encoding the data hash itself.

As pointed out in \cite{Matsuo_17}, there are several layers on which blockchain provides guarantees - of consistency, security, etc.\ - each with potentially different tools to check that these guarantees hold. The present solution does not offer a programming language for smart contracts and makes no specification as to the networking or consensus protocols.

The novel aspect of our solution is that a client of the service receives evidence in three steps. In the first
step, evidence is received almost immediately, but a lot of trust is
required. In the second step,
less trust is required, but evidence is received seconds later. Finally, in the third step evidence
is received within minutes via a public blockchain. We achieve our results thanks to the interaction between two blockchains, one of which is public.

\par\medskip

After some notation and preliminaries in Section \ref{sec:lit_review}, we give a semi-formal specification of the solution in Section \ref{sec:description}. Sections \ref{sec: proofs} and \ref{sec: advanced attacks} contain our security proofs. Section \ref{sec: comments} contains a discussion about possible DOS attacks, while Section \ref{sec: conclusions} hosts our conclusions.

\section{Notation and preliminaries}
\label{sec:lit_review}

Let time be measured in intervals $[a,b)=\{t\in\mathbb{R} \mid a\leq t < b\}$, for any $a<b\in \mathbb{R}$.
Let $\sign{A}{}$ be a digital signature algorithm using the private key of data owner $A$; if the owner can be determined unambiguously, we may sometimes omit $A$ for clarity. We assume $\sigma$ to be resistant to impersonation attacks: an attacker D cannot obtain $\sign{D}{m}$ from $\sign{A}{m}$. We also assume $\sigma$ to be unforgeable. For example, ECDSA satisfies both security properties under the usually-assumed hardness of the DLOG problem in (strong) elliptic curves.

Let $\cat$ denote string concatenation. As a shorthand when dealing with containers of data with attached a signature of the contents, e.g.\ block headers, we will commonly employ expressions such as
\begin{equation}\label{std}
s =  T \cat \sigma(s),
\end{equation}
where $T$ are the contents, the signature is computed as
\begin{equation*}
    \sigma(T \cat \underline{0})
\end{equation*}
with $\underline{0}$ a string of zeroes of the same length as $\sigma(T \cat \underline{0})$, and finally the container $s$ is composed by replacing $\underline{0}$ by $\sigma(T \cat \underline{0})$. The shorthand $\sigma(s)$ is employed to indicate that the signature refers to the entire contents, without writing them explicitly at length.

Throughout this paper we denote with $\eta_{A}(m)$ a public-key encryption of message $m$ with public key of actor $A$; with $\Tree\set{j}$ a Merkle tree of a list of items $j$; and with $\hash{m}$ a cryptographic hash function, in the sense of \cite{Rogaway_Shrimpton_04}, of message $m$. 
In particular, the collision resistance of $\hash{}$ is required in order to deduce that the forgery of a path in the Merkle tree would imply a collision.

The seminal work on digital timestamping is due to Haber and Stornetta \cite{Haber_Stornetta_91}, on which widely used standards for trusted timestamping today are based, such as RFC 3161 \cite{RFC_3161}, ANSI X9.95, and ISO 18014 \cite{ISO_Timestamping}.

We present the following two algorithms to recall these important methods and establish a common notation, though we do not make use of them directly in our solution.

\begin{algorithm}[Trusted authority timestamping]

	In \cite{Haber_Stornetta_91}, the client holding data $d$ sends its hash $\hash{d}$ to a trusted timestamping authority $\mathcal{A}$, which returns a signed statement $\tau$ of the time of receipt, $t$:
	\begin{align}
		\tau	& = \hash{d} \cat t \cat \sign{\mathcal{A}}{\tau} .
	\end{align}

	RFC 3161 is substantially the same scheme, but an additional hashing step is introduced:
	\begin{align}
		\tau	& = \hash{\hash{d} \cat t}\cat  \sign{\mathcal{A}}{\tau}
	\end{align}

\end{algorithm}

\noindent These schemes place all trust in the hands
of the authority $\mathcal{A}$, but in practice  trust is distributed among several stakeholders with successive
timestamps.

By using hash trees, more sophisticated timestamping algorithms have been developed (see e.g. \cite{Bayer_Haber_Stornetta_91,Haber_Stornetta_97}).

\begin{algorithm}[Tree-linked timestamping]
	At step $k$, define a time interval $[t_{k-1},t_{k})$. The server $\mathcal{A}$ collects all requests $\Theta_k=\{\theta_{k,i}\}_{i}$ received in the time interval, and builds the Merkle tree $\Tree(\Theta_k)$. We denote its root with $r_k$ and we call it \emph{interval root}. 

	The interval roots are linked together by the following rule, thus forming a chain of hashes $\{R_{\ell}\}_{0\leq \ell \leq k}$:

	\begin{align*}
	R_0&= 0 \\
	R_{\ell}& = \hash{R_{\ell-1} \cat r_{\ell}} 
	\end{align*}

	The values $\{R_{\ell}\}_{0\leq\ell\leq k}$ are placed in a widely available repository. The server then returns to each requester a receipt with the time $t_{k}$, along with the path in the Merkle tree from the requester’s leaf up to the value $R_k$.

\end{algorithm}

The authors of \cite{Haber_Stornetta_97} themselves point out that the integrity of the public repository of root hashes is the only requirement on which the authenticity of a document with receipt relies.

\section{Solution description}
\label{sec:description}
We now describe our solution. We do not require that blocks are created at fixed-time intervals, but we require a time division in intervals. To be more precise, 
since each block hosts the time of its  creation, we can consider time intervals using the index $k$ ($k\geq 0$), so that interval $k$ corresponds to $[t_{k-1},t_k)$ for $k\geq 1$ and interval $0$ corresponds to time $t < t_0$. 
\subsection{Architecture}

The service handles interactions among the following participants:
\begin{itemize}
	\item	The clients of the service wish to provably record the existence of data $d$ at a given time. $C$ denotes an arbitrary client.
	\item Some service nodes check proposed transactions for validity, provide evidence of record to clients, and maintain a blockchain, which we will call \textit{proxy blockchain}. $N$ and $M$ denote nodes. At block creation time, one of the service nodes acts as committing node and thus prepares a block, which is submitted to the other service nodes for acceptance.
	\item	One auxiliary node $A$ commits further evidence to a public ledger $L$, used as a reference clock and trusted timestamping service, and monitors client transaction activity. $A$ never acts as a service node. 
\end{itemize}

The service relies on a trusted certificate authority to provide the public key infrastructure necessary to both identify the nodes with permission to participate in the service and provide the ability to digitally sign documents. 
We assume that the proxy blockchain is at least permissioned if not private, and that it is a robust ledger in the sense of \cite{Garay_Kiayias_Leonardos_15}, i.e.\ guaranteeing persistence and liveness. We do not address the backbone  protocol and the consensus algorithm, in particular we make no specification as to how the information of transactions propagates, or how the node that shall create the next block is chosen, or how the other service nodes accept its proposed block.

\par\medskip

The main function of our service is delivered by enabling a client $C$ to prove existence of some data at commit time $t$, based on some evidence. We provide evidence in the form of digitally signed receipts and blockchain and $C$ receives evidence with \emph{incremental trust}, in three successive steps, as sketched below: 
\begin{enumerate}
\item Evidence issued by the service node receiving data.
	\item	Evidence issued by the service node that create blocks in the proxy blockchain.
	\item	Evidence issued by the auxiliary node $A$ and hosted by public blockchain $L$:  $A$ issues some evidence, which is partially stored in a public repository $L$, such as the bitcoin network.
\end{enumerate}
In the first step, $C$ receives a first signed receipt. In the second step, $C$  receives a second receipt and is able to access some evidence on the proxy blockchain.
In the last step, $C$ receives a final receipt and is able to access some evidence on a \emph{public} blockchain.
We speak of incremental trust because the probability of $C$ colluding with the other actors decreases significantly from one step to the next.

\subsection{Block creation and Node receipt issuance}
We now describe in detail the first phases of our system. We assume that all clients are operating in time interval $k$.
\par\medskip
\noindent\textbf{Transactions}\\
Each client $C$, identified by a digital identity $\iota_C$, creates a self-signed statement $\tau$  and sends it to one of the nodes $M$ for validation, which becomes in notation (\ref{std})
\begin{align}	\label{eqn:transaction}
	\tau	& = \sign{C}{d} \cat t\cat \iota_C \cat \sign{C}{\tau} 
\end{align}
This statement $\tau$ is analogous to a transaction in popular blockchain solutions, so we will call it \textit{transaction}. $M$ checks the signature in $\tau$ for validity and the claimed time $t$ (i.e. $t> t_{k-1}$). If the check is successful, $M$ broadcasts $\tau$ to the other
nodes and sends  $\sigma_M(\tau)$ back to $C$, which acts as the \emph{first receipt}.  We call $M$ the \textit{validator} of transaction $\tau$.

\bigskip

\noindent
\textbf{Block creation}
\\
The next committing Node $N$ then constructs a block by the following procedure. We assume that $k$ blocks have already been created, with block $k-1$ being the last created\footnote{Block 0 can be made with obvious modifications.}, meaning that all the following actions by $N$ are implicitly related to time interval $[t_{k-1},t_{k})$.

Let $T_C$ be the set of valid transactions originated by $C$ and received\footnote{This set might be smaller than the set of all transactions issued by $C$} by $N$ (via validator nodes), ordered according to a predefined rule\footnote{For example, according to the order defined by the integer representation of $\sign{C}{d}$.}.
Let $\Tree(T_C)$ be the Merkle tree of $T_C$, and let $R_C$ be its root. 
\\
Let $\mathcal{C}$ be the set of all Clients that originated transactions received by $N$, which we call \textit{transacting Clients}. Node $N$ calculates the Merkle tree  $\Tree\left(\left\{R_C\right\}_{C\in \mathcal{C}}\right)$, whose root is called $\Rho$. We refer to $R_C$ as the \textit{Client root} and $\Rho$ as the \textit{block root}.

We now define block $\Beta_k$ constructed by Node $N$.
The block will contain a \textit{header}, a list of \textit{summary transactions} (one per transacting Client), and a \textit{phantom part}.
\\
The summary transaction for $C$ contains a public encryption and is as follows:
\begin{align*}
\eta_{A}( \iota_C ) \cat R_C .
\end{align*}

The header $\mathcal{H}_k$ of $\Beta_k$ is, in notation (\ref{std}),
\begin{align}	\label{eqn:block}
	\mathcal{H}_k	& = \hash{\mathcal{H}_{k-1}} \cat k \cat t_k \cat\Rho \cat \iota_N \cat \sign{N}{\mathcal{H}_k} ,	
\end{align}
where $t_k$ is stated by $N$ as the creation time of $\Beta_k$.

The phantom part is the list of all transactions referred by the summary transactions, that is $\cup_{C\in\mathcal{C}}T_C$. This transaction list is called \textit{phantom part} because it is a part of the block which is visible only to the Nodes, and so invisible to the Clients. 
\bigskip

\noindent
\textbf{Receipts}\\
Once $N$ creates the block $\Beta_k$, $N$ issues a receipt $\rho_C$ to each transacting Client $C$, which in notation (\ref{std}) is written
\begin{align}	\label{eqn:receipt_Node}
	\rho_C	& = T_C \cat R_C \cat \pi_C \cat \sign{N}{\rho_C} ,
\end{align}
where $\pi_C$ is the shortest path in the Merkle tree from $R_C$ to $\Rho$.

\subsection{Public ledger and Auxiliary receipt issuance}
Let $m$ be a fixed number $m\geq 2$. Every $m$ blocks, node $A$ interacts with the public blockchain. Let $k_0$ be the last time interval in which this happened, and call \textit{anchorage block} a block corresponding to one of these interactions.
At the end of time interval $k_0+m$, another anchorage block is created, therefore $A$ collects the ordered list of Merkle roots and block hashes of blocks $k_0+1,\ldots, k_0+m$ and uses these as $2m$ leaves of an auxiliary Merkle tree $\Tree_{A}$,
\begin{align}
	\Tree_{A}=\Tree\left(\set{\Rho_k, \hash{\mathcal{H}_k}}_{k_0+1\leq k\leq k_0+m}\right)
\end{align}
with root $\mathcal{R}_{k_0+m}$, referred to as the \textit{auxiliary root}.

The data committed to the public ledger\footnote{The commitment of these data to the public ledger may require more than one transaction, according to the public-blockchain transaction format.} will be
\begin{align}
	\mathtt{pub\_data}=\iota_A \cat k_0 \cat m \cat \mathcal{R}_{k_0+m} \,.
\end{align}
Finally, the auxiliary node issues an auxiliary receipt $\underline{\rho}_C$ to every client transacting in the intervals $k_0+1\leq k\leq k_0+m$. Each such Client will already be in possession of a set of receipts \eqref{eqn:receipt_Node}, which contains a set of blocks roots inside the paths $\pi_C$'s. This set of block roots is a subset of the leaves of the tree with root $\mathcal{R}_{k_0+m}$. Therefore, $\underline{\rho}_C$ will contain the shortest path $\underline{\pi}_C$ in $\Tree_{A}$ required for $C$ to recompute $\mathcal{R}_{k_0+m}$:
\begin{align}\label{underrho}
	\underline{\rho}_C	& = k_0 + 1 \cat k_0 + m \cat \mathtt{pub\_data} \cat v \cat \underline{\pi}_C\cat \sign{A}{\underline{\rho}_C} \,,
\end{align}
where $v$ is the address in the public blockchain of the transaction containing $\mathtt{pub\_data}$.

\section{Proof of notarization}
\label{sec: proofs}
The system that we have described in the previous section may appear
unnecessarily complicated. After all, if there exists a ``good''
timestamping server, people could just use it and get its signature in
return. But herein lies the problem, because for a timestamping server to be
``good'' it needs to be secure, reliable, approachable - in the sense that
it is easy to communicate with it, both in bandwidth and in
permissions - and cheap to use. While features like reliability,
connectivity, and cost can be relatively easy to estimate, security
remains much more difficult to evaluate. Indeed, we are not aware of
any timestamping service on the Internet that presently satisfies all
these properties, especially security.

The only system that might provide reasonable security is a public
blockchain, such as the Bitcoin, and it would easily provide also
approachability and reliability (in particular, avoiding the risk of a single point of failure). However, at present, the cost of
transactions on a public blockchain is very high, making its direct use for storing proofs of documents infeasible. Therefore, many
competing solutions have been proposed, whose general aim is to
collect information on many documents - typically in hash form - and
create paths of hashes linking each document to the final digest
released on the public blockchain, e.g., Eternity Wall \cite{Eternity_wall}, Factom \cite{Factom_white_paper}, and
Guardtime \cite{Guardtime}.
These solutions must give their users some sort of receipt, allowing them
to reconstruct the hash path and prove the existence of their documents.

Although our solution may appear similar, we aim at something more: we want to
give our users
incremental security. In the next sections we will describe our security claims and provide proofs.

\subsection{Security claims}
\label{sec:goals}

The solution in Section \ref{sec:description} builds on the basic premise of digital notarization of a document. 
Each client correctly interacting with our previous system (in a
correct implementation) may be seen as a notary making a statement of the existence of data $d$ by adding their digital signature, $\sign{C}{\hash{d}\cat t}$. We are not concerned with the kind of information carried by $d$. Indeed, the data may or may not be a document signed by parties entering a contract, and their handwritten signatures may have been added on a paper or digital copy; we here emphasize that the only digital statement of \textit{authenticity} by digital signature is the Client's.

We assume throughout that the blockchain is a \emph{robust ledger} in the sense of \cite{Garay_Kiayias_Leonardos_15}, i.e.\ guaranteeing persistence and liveness. Since no specification of consensus mechanism is made here by design, we do not consider the question of whether the blockchain offers sufficient guarantees of consistency, crash fault tolerance, or security against collusion.

Let us consider the first step of our incremental security, which is
the first interaction of a client $C$ with our system. Let us call ``Tom''
someone who will come and will not believe in $C$'s claims about the
existence of its claimed documents at the claimed time. $C$ hopes to be able
to convince Tom by using our protocol.

At the start, 
$C$ sends the signature of a document to the node network,
encapsulated in the transaction $\tau$  \eqref{eqn:transaction}, and a node $M$
receives it.
When $M$ sends back to $C$ the signature
$\sign{M}{\tau}$, $M$ is actually giving $C$ the \underline{first evidence} that
$C$ can show to Tom
about its good faith. $C$ is therefore immediately - say, in a few seconds - in possession of 
a receipt claiming the existence of its documents at the claimed
time.

Whether Tom trusts this claim depends on whether Tom trusts $M$, specifically. This is
equivalent to trusting a single timestamping server and can be
modeled simply as follows.
\begin{theorem}
If $M$ is trusted, then the data claimed by $C$ existed at the claimed
time and were known to $C$.
\end{theorem}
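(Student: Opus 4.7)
The approach is to reduce the statement to two cryptographic facts and one modeling convention. First I would note that the first evidence Tom is shown is $\sign{M}{\tau}$ as in \eqref{eqn:transaction}, so Tom can verify it against $M$'s public key. By the unforgeability of $\sigma$ stated in Section~\ref{sec:lit_review}, a valid $\sign{M}{\tau}$ entails that $M$ itself produced it after processing $\tau$; otherwise Tom would hold a forgery, contradicting the assumption on $\sigma$.

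Next I would invoke the trust hypothesis to unpack what ``processed'' means. Per the protocol following \eqref{eqn:transaction}, $M$ only returns $\sign{M}{\tau}$ after validating the inner signature $\sign{C}{\tau}$ and after checking that the claimed time $t$ lies in the current interval (in particular $t > t_{k-1}$, and implicitly $t$ is not in $M$'s future). Because $M$ is trusted, these checks are honest and $M$'s broadcast and signature happen essentially at time $t$; thus $\tau$ reached $M$ at a time consistent with the claimed $t$, which is therefore a faithful timestamp of receipt.

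Then I would recover the authorship assertion by inspecting the payload of $\tau$, which contains $\sign{C}{d}$. By unforgeability and the impersonation-resistance of $\sigma$, no party other than $C$ could have produced this signature; in particular, its existence forces $C$ to have possessed the data $d$ before submitting $\tau$. Combined with the previous paragraph, $d$ was known to $C$ at time $t$, which is the conclusion.

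The main obstacle, as I see it, is not technical but definitional: the theorem freely mixes informal notions (``trusted'', ``existed at the claimed time'', ``known to'') with operational ones (``signed'', ``accepted in interval $k$''). The bulk of the proof is therefore a careful translation, after which the cryptographic content collapses onto the already-stated unforgeability and impersonation-resistance of $\sigma$. A subtle point worth flagging is that the trust assumption on $M$ absorbs all concerns about $M$ backdating, colluding with $C$, or running a skewed clock; if any of these were admissible the theorem would fail, and it is precisely these failure modes that the second and third tiers of evidence in the incremental-security construction are designed to mitigate.
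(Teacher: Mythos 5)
Your proposal is correct and follows essentially the same route as the paper, whose entire proof is the one-line observation that the claim ``comes from the unforgeability property of the signature $\sigma_C$ and $\sigma_M$''; your text is simply a careful unpacking of that line, combining unforgeability of $\sign{M}{\tau}$ and $\sign{C}{d}$ with the trust hypothesis on $M$'s checks. No gap; yours is just the expanded version of the argument the paper leaves implicit.
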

\begin{proof}
This comes from the unforgeability property of the signature
$\sigma_C$ and $\sigma_M$.
\end{proof}

In the second step, to increase trust we need to
increase the number of entities working for the system, but making
sure they arrive at an agreement
on the documents. Nowadays, this can be achieved with a private
blockchain. We require it to be private so that Tom knows all service nodes
and can decide whether he trusts them.

Observe that we have not specified how consensus is reached in
the proxy blockchain. It could be that a majority of nodes is needed,
or that all nodes must confirm $N$'s proposed block, or some other more
complex strategy. It does not matter, as long as Tom agrees that the consensus algorithm is trustworthy.
What matters is that $C$ has collected the new block header and that
he has received a receipt $\rho_C$ \eqref{eqn:receipt_Node} from $N$.
With this \underline{second evidence}, Tom will agree on the following
\begin{theorem}
If the new block has been generated with a trusted consensus
algorithm, then (at least at time $t_{k}$) the data claimed by $C$ existed and were known to $C$.
\end{theorem}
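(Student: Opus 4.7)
The plan is to reduce the theorem to the same two cryptographic primitives already invoked for the first theorem, namely the unforgeability of $\sigma_C$ and $\sigma_N$, together with the collision resistance of $\hash{}$ that binds the Merkle trees of Section~\ref{sec:description}. Concretely, I would argue contrapositively: any scenario falsifying the conclusion would exhibit either a signature forgery or a hash collision, both excluded by the assumptions in Section~\ref{sec:lit_review}.

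First I would unwind what $C$ can present to Tom: the receipt $\rho_C$ from \eqref{eqn:receipt_Node}, which carries $T_C$, $R_C$, the path $\pi_C$ and $\sign{N}{\rho_C}$, together with the header $\mathcal{H}_k$ from \eqref{eqn:block}, which carries $\Rho$, $t_k$ and $\sign{N}{\mathcal{H}_k}$. Unforgeability of $\sigma_N$ ensures these fields are exactly what $N$ committed. Using $R_C$ and $\pi_C$ one recomputes a candidate root $\Rho'$; by collision resistance of $\hash{}$, the equality $\Rho' = \Rho$ (with $\Rho$ authenticated inside $\mathcal{H}_k$) forces $R_C$ to be a genuine leaf of the block-root tree. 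The identical argument applied to $\Tree(T_C)$ shows that the transaction $\tau$ from \eqref{eqn:transaction} is a genuine leaf under $R_C$.

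Next I would combine this with the consensus hypothesis. Because the block $\Beta_k$ has been accepted by a trusted consensus algorithm, Tom accepts both the creation time $t_k$ stated inside $\mathcal{H}_k$ and the fact that $\Beta_k$ aggregates only transactions validated within $[t_{k-1}, t_k)$. Hence $\tau$ was in the nodes' possession before $t_k$. Since $\tau$ contains $\sign{C}{d}$, unforgeability of $\sigma_C$ yields that $C$ personally produced that signature, so $C$ must have known $d$ no later than $t_k$. Chaining the three conclusions gives exactly the statement of the theorem.

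The main obstacle is the informal status of the phrase \emph{trusted consensus algorithm}: the paper deliberately leaves the backbone and acceptance protocols unspecified, but the step above silently requires that consensus certifies both the creation time $t_k$ and the property that the phantom part (invisible to clients) really consists of transactions received in $[t_{k-1}, t_k)$. I would therefore state these two guarantees explicitly as part of what \emph{trusted consensus} is assumed to deliver in this context, after which the cryptographic reduction above closes the argument cleanly.
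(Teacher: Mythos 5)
Your proof is correct and follows essentially the same route as the paper's: trust in the consensus algorithm certifies that the service nodes saw and validated $C$'s transactions before the stated creation time $t_k$, while the receipt $\rho_C$, the header $\mathcal{H}_k$ and the collision resistance of $\hash{}$ let Tom verify the hash path from $T_C$ up to $\Rho$ without ever seeing the phantom part. Your extra steps---explicitly invoking unforgeability of $\sigma_N$ and $\sigma_C$, and spelling out exactly what a \emph{trusted consensus algorithm} must be assumed to guarantee---only make explicit what the paper's shorter proof leaves implicit.
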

\begin{proof}
All the service nodes that reached consensus have seen $T_C$, so by
signalling their agreement to the new block they agreed that the
transactions from $C$ were valid and,
in particular, that the transactions were sent before the creation time
$t_k$ of the block. Crucially, Tom has not seen the phantom part of
the block, but he does not need it.
Indeed, from $\mathcal{H}_k$ Tom can get $t_k$ and $\Rho$. From
$\rho_C$ he gets $T_C$ and $\pi_C$, which shows the validity of the
hash path.\\
$C$ could not have forged the hash path $\pi_C$ due to the hash
security properties.
\end{proof}
An obvious question might arise when looking at the above proof: why
keep a phantom part?
This need  arises from privacy reasons: we do not want any client $C$
to see the transactions coming from another client $C'$. Of course,
this goal could
be reached with e.g.\ cryptography, but we take advantage of the use of a private blockchain
to avoid more complicated features.

Thus $C$ obtains within a short period of time
some evidence on its documents that provides a much higher confidence
for Tom:
it is one thing to compromise or collude with a single participant $M$, and another to
organize a collusion among the service nodes, including the miner $N$.

\par\medskip

In the third and last step, if Tom does not trust the
proxy blockchain, we will assume that he trusts the anchored public
blockchain. Again, this trust means that,
whatever consensus algorithm employed and whatever participants
involved, the anchored public block was issued at a time prior to
$\bar t$.

Assuming that $C$ has collected the new public block 
and the receipts $\underline{\rho}_C$ and $\rho_C$ ($\rho_C$ refers to time interval $k$), Tom will then agree on the following.
\begin{theorem}
If a new public block has been generated in a public blockchain
by a trusted consensus algorithm, then (at least at time $\bar t$) the data claimed by $C$ existed and were known to $C$.
\end{theorem}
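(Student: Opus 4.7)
The plan is to chain Merkle-inclusion arguments across the three layers of the construction, namely the client tree (root $R_C$), the block tree (root $\Rho_k$), and the auxiliary tree $\Tree_A$ (root $\mathcal{R}_{k_0+m}$), and to anchor the chain to the public blockchain rather than to the proxy consensus. Structurally this mirrors the proof of the previous theorem, with the trusted clock moved from the proxy block at time $t_k$ to the public block at time $\bar t$.

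First I would observe that, by the assumed trustworthiness of the public consensus algorithm, the payload $\mathtt{pub\_data} = \iota_A \cat k_0 \cat m \cat \mathcal{R}_{k_0+m}$ residing at the public-ledger address $v$ was fixed no later than $\bar t$. In particular the auxiliary root $\mathcal{R}_{k_0+m}$ was determined by $\bar t$. I would then have Tom carry out three successive inclusion verifications. Using $\underline{\rho}_C$ from \eqref{underrho}, he reads $v$ and the path $\underline{\pi}_C$ and recomputes $\mathcal{R}_{k_0+m}$ starting from the leaves $\Rho_k$ and $\hash{\mathcal{H}_k}$ taken from $\rho_C$; a match against the on-chain value, by collision resistance of $\hash{}$, certifies that $\Beta_k$ existed before $\bar t$. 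From $\rho_C$ he then rebuilds $\Rho_k$ out of $R_C$ via $\pi_C$, and $R_C$ out of $T_C$, again invoking collision resistance. Finally, the unforgeability of $\sigma_C$ on each transaction in $T_C$ shows that $C$ actually knew $\sign{C}{d}$, and hence $d$, by time $\bar t$.

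The main obstacle I expect is clarifying that no additional trust in the auxiliary node $A$ is required in this last step: the signature $\sign{A}{\underline{\rho}_C}$ only binds $A$ to the receipt it issued, but Tom verifies the crucial identity by recomputing $\mathcal{R}_{k_0+m}$ from the block-side leaves and the path and comparing it against the value fetched directly from address $v$ on the public ledger. A malicious $A$ anchoring a different root would therefore produce a visible mismatch and the receipt would simply fail to verify, without being able to discredit a legitimate client. The only residual assumption is the one already granted in the hypothesis, namely that the public blockchain correctly timestamps its blocks, so that $\bar t$ is a trustworthy upper bound on the existence time of every ingredient committed into $\mathtt{pub\_data}$.
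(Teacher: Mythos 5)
Your proposal is correct and follows essentially the same route as the paper's own proof: anchoring trust in $\mathtt{pub\_data}$ (hence $\mathcal{R}_{k_0+m}$) via the trusted public blockchain, then chaining Merkle-path verifications $\mathcal{R}_{k_0+m} \to \Rho_k \to R_C \to T_C$ with collision resistance ruling out forgery. Your added remarks on the unforgeability of $\sigma_C$ and on not needing to trust the auxiliary node $A$ simply make explicit what the paper states immediately after its proof, so there is no substantive difference.
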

\begin{proof}
The new public block contains a public transaction containing
$\texttt{pub\_data}$. Since the public nodes have reached consensus and
Tom trusts the public blockchain, he
will trust that \texttt{pub\_data} existed before $\bar{t}$.
Indeed, from $\texttt{pub\_data} $ Tom can get $\mathcal{R}_{k_0+m}$. From
$\underline{\rho}_C$ he gets $\underline{\pi}_C$, which shows the
validity of the hash path from $R_{k_0+m}$ to $\Rho_K$. 
From
$\rho_C$ he gets $\pi_C$, which shows the
validity of the hash path from $\Rho_K$ to $R_C$.
From $\rho_C$ he also extracts $T_C$, which contains the transaction with the claimed data, and can check its validity by recomputing $R_C$.

$C$ could not have forged the hash paths $\underline{\pi}_C, \pi_C$ or the tree root $R_C$, without incurring in a hash collision. 
\end{proof}
Observe that in this last step Tom does not need to put any trust in
the proxy blockchain, because he can verify by himself the related
hash chains.

\par\medskip

Obviously, in this third step of incremental trust Tom does feel very
sure about $C$'s claim, but $C$ obtains all the needed \underline{third evidence} only
\textit{after} the public block creation, which will probably last some minutes
and its timestamping claim can only be up to $\bar{t}$ rather than its
claimed $t$.


\section{Attacks with widespread collusion}
\label{sec: advanced attacks}
In Section \ref{sec: proofs} we showed how Tom can be convinced by $C$ in
different trust scenarios.
To convince Tom, $C$ needed some valid evidence from the system.
But the system might decide not to release such evidence
and try to obtain some advantage for itself.
We will consider now attack scenarios where the system does not
interact correctly with $C$.
We will assume that all node services (including miners and auxiliary
nodes) are malevolent.

The three scenarios we are considering are: ``Fake Owner'', ``Ghost document: proxy version'', ``Ghost document: public version''. In the  ``Ghost document'' scenarios also the client $C$ is malevolent.

\par\medskip

\textbf{Fake Owner}\\
Client $C$ sends a transaction $\tau$ \eqref{eqn:transaction} at time $t$. The system does not acknowledge $\tau$, and instead 
provides a colluding client $D$ with evidence enough to claim that $D$
knew data $d$ at time $t$.
\begin{theorem}
The Fake Owner attack fails.
\end{theorem}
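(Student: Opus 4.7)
The plan is to reduce the attack to an attempt by $D$ to produce a forged signature, and then invoke the impersonation-resistance and unforgeability assumptions stated in Section~\ref{sec:lit_review}. Before anything else, I would fix what counts as a successful attack: $D$ must end up with a bundle of data that a skeptic such as Tom would accept as evidence that $D$ (not $C$) knew $d$ at time $t$. By the structure of the protocol, any such bundle --- whether first, second, or third evidence --- is anchored in a transaction of the form $\tau_D = \sign{D}{d} \cat t \cat \iota_D \cat \sign{D}{\tau_D}$ in notation~\eqref{eqn:transaction}, because receipts, Merkle paths, block headers, and public anchorings are all computed over such transactions.

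First I would argue that $D$ does not know $d$. The hypothesis of the attack is that $C$ originates the transaction containing $\sign{C}{d}$ and the system merely \emph{intercepts} it; in particular, $d$ itself is never transmitted, only its signature under $C$'s key. So the view available to $D$ (even with full cooperation from every node and the auxiliary node $A$) contains $\sign{C}{d}$ but not $d$. Next I would show that $D$ cannot produce the crucial component $\sign{D}{d}$ of $\tau_D$: by the impersonation-resistance assumption on $\sigma$, $D$ cannot derive $\sign{D}{d}$ from $\sign{C}{d}$, and by unforgeability $D$ (or the colluding system) cannot fabricate $\sign{D}{d}$ without knowledge of $d$ and $D$'s secret key applied to $d$.

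Having blocked the construction of $\tau_D$, I would then propagate this impossibility through the three evidence layers. At the first layer, no validator $M$, however corrupt, can sign and return a valid $\sigma_M(\tau_D)$ because any $\tau_D$ it signs must still contain the unforgeable component $\sign{D}{d}$; Tom can always recompute and check this against $\iota_D$. At the second and third layers, the block receipt $\rho_C$ in~\eqref{eqn:receipt_Node} and the auxiliary receipt $\underline{\rho}_C$ in~\eqref{underrho} both rely, through the Merkle paths $\pi_C$ and $\underline{\pi}_C$ and the collision resistance of $\hash{}$, on a genuine leaf transaction; substituting $\iota_D$ in a transaction while keeping $\sign{C}{d}$ yields a transaction whose signature check fails against $\iota_D$, and producing a Merkle path to a leaf that does pass the check would require either a hash collision or the missing signature $\sign{D}{d}$.

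The main obstacle I expect is the bookkeeping of the collusion model: the proof has to be careful that granting the adversary control over all service nodes and the auxiliary node does not secretly grant it more than a transcript of $C$'s messages, so that the only cryptographic assumption actually needed is impersonation resistance of $\sigma$ together with the collision resistance of $\hash{}$ already invoked in Section~\ref{sec: proofs}. Once that modeling step is pinned down, the rest is a straightforward unwinding of the three-layer evidence structure back to the single inaccessible signature $\sign{D}{d}$.
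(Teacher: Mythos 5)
Your proposal is correct and follows essentially the same route as the paper: both reduce the attack to the impossibility of producing the component $\sign{D}{d}$ of a fake transaction $\tau'$, which is ruled out by the impersonation-resistance assumption on $\sigma$ (since $D$'s view contains only $\sign{C}{d}$, not $d$). The paper simply short-circuits your layer-by-layer unwinding with the remark that, once the fake transaction cannot be created, the adversary's subsequent moves need not be modeled; your extra propagation through the three evidence layers (and the appeal to collision resistance of $\hash{}$) is consistent with, but not required by, the paper's argument.
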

\begin{proof}
First, the malevolent nodes need to create a fake transaction
$\tau'$. We do not need to model what they will do next, because we
claim
it is impossible to create it. Indeed, $\tau'$ would have the form in notation (\ref{std})
$$
   \tau'=\sigma_D(d) || t || \iota_D || \sigma_D(\tau')
$$
In particular, it would contain $\sigma_D(d)$. However, we assumed
that our signature algorithm was resistant to impersonation attacks,
and so
this cannot happen.
\end{proof}

\textbf{Ghost document: proxy version} \\
After a new proxy block $\mathcal{H}_k$ is created at time $t_k$ - but
before an anchorage block, i.e. $k \ne m \lambda$ for any $\lambda$ -
$C$ colludes with all service nodes
and inserts a new transaction claiming that $C$ knew data $d'$ at time
$t' < t_k$.

This attack may work, because the nodes can decide to:
\begin{itemize}
\item discard block $k$ and all existing successive blocks in their blockchain,
\item recompute $T_C$ (to include $\tau'$) and all relevant
information in $\mathcal{H}_k$,
\item recompute $\rho_C$ (to include the new $T_C$), recompute all
receipts for the other clients (and send them);
\item recompute the headers of the successive blocks (to have valid
hash pointers) and resend them to the clients.
\end{itemize}
The other clients might complain at seeing the change in past block
headers, but they may be satisfied when they receive valid blocks
and valid receipts. If this attack is performed rarely, the clients
(and Tom) may be induced into believing that the block updates are due
to some software problems
rather than malice.

\par\medskip

\textbf{Ghost document: public version} \\
After a new proxy block $\mathcal{H}_k$ is created at time $t_k$ (an
anchorage block, so $k = m \lambda$ for some $\lambda$) and the anchor
has been created at time $t'$, $C$ colludes with all service nodes and
inserts a new transaction in $\mathcal{H}_k$ claiming that $C$ knew
data $d'$ at time $t' < t_k$.
\begin{theorem}
A ghost document in public version cannot be created.
\end{theorem}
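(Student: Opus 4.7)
The plan is to reduce the impossibility claim to the immutability of the public ledger combined with collision resistance of the hash function $h$. First, I would observe that the value $\mathcal{R}_{k_0+m}$ is already committed inside $\mathtt{pub\_data}$ on the public blockchain at time $t'$; by the standing assumption that the public chain is a robust ledger and that Tom trusts its consensus, this root (together with its timestamp) cannot be rewritten by any coalition of service nodes, even if all of them are malevolent and collude with $C$.

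Next, I would trace the hash dependency backwards from $\mathcal{R}_{k_0+m}$ down to $T_C$. Any successful ghost-document attack must eventually produce a verifiable pair $\underline{\rho}_C$, $\rho_C$ for a modified transaction list $T'_C$ containing the ghost $\tau'$; the verification procedure used in the previous theorem recomputes $R'_C$ from $T'_C$, then a candidate block root $\Rho'_k$ using the path $\pi_C$, and then a candidate $\mathcal{R}'_{k_0+m}$ using $\underline{\pi}_C$. For Tom to accept, this final value must equal the anchored $\mathcal{R}_{k_0+m}$. Since $\tau' \notin T_C$ originally, producing equal roots from two different leaf multisets would yield a hash collision at one of three levels: inside $\Tree(T_C)$, inside $\Tree(\{R_C\}_{C\in\mathcal{C}})$, or inside $\Tree_A$; an analogous collision would be needed to preserve $\hash{\mathcal{H}_k}$, which also appears as a leaf of $\Tree_A$.

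The main obstacle, and the step where I would be most careful, is making the collision argument truly exhaustive. In principle the attackers could try to preserve $\mathcal{R}_{k_0+m}$ not by fixing each intermediate root in isolation but by simultaneously varying several siblings across the three trees; a clean induction on Merkle-tree depth, along the lines of the standard lemma that a forged sibling path implies a hash collision, is needed to rule this out. Once such a lemma is invoked, the contradiction with collision resistance of $h$ closes the proof. It is also worth noting explicitly that the argument genuinely requires $\mathcal{H}_k$ to be an anchorage block: otherwise we are back in the proxy-version setting, where no public commitment pins $\Rho_k$ down and the attack succeeds exactly as described there.
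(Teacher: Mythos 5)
Your proposal is correct and takes essentially the same route as the paper's own proof: the anchored root $\mathcal{R}_{k_0+m}$ committed in $\mathtt{pub\_data}$ is immutable on the public blockchain, so any verifiable path from a modified $T'_C$ back to that root would force a hash collision, contradicting collision resistance. Your version simply makes explicit what the paper leaves terse --- the three-tree decomposition of where the collision must occur, the standard forged-path-implies-collision lemma (which the paper invokes via its remark in the preliminaries), and the observation that the anchorage condition $k = m\lambda$ is exactly what separates this case from the successful proxy-version attack.
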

\begin{proof}
Since the anchor has been created, \texttt{pub\_data} is in the public
blockchain. To be able to claim knowledge of data $d'$, $C$ needs a
valid hash path
pointing to the root
corresponding to the new $T_C$, so that he could use it to replace $\underline{\pi}_C$. However, this path must end in
$\mathcal{R}_{k}$, which is immutable in the public blockchain, and so
it is impossible to forge another path due to security properties of
the hash function.
\end{proof}

\section{Some comments on DOS attacks}
\label{sec: comments}

In the past section we do not investigate scenarios when malevolent actors of
the system want to mount a DOS (Denial Of Service).
We now examine this situation. There are three possible attackers:
a validator node, the auxiliary node $A$ and the PKI's CA.

\textbf{Validator node} If a malevolent node $M$ receives a transaction $\tau$ from a client, $M$ can decide to ignore it. In this case, $C$ would notice that something went amiss and $C$ would try to contact another node $N$. This kind of DOS is dangerous only if the client's communication
with the system is limited to a group of colluding nodes.\\
A malevolent $M$ could do worse than simply dropping $\tau$: $M$ could send the first receipt
$\sigma_M(\tau)$ back to $C$ and avoid broadcasting it to the other service nodes.
In this way, $C$ is tricked into thinking that $M$ is behaving honestly. However, $C$ is expecting to receive also the second receipt $\rho_C$ in a while. If this does not happen, $C$ will know something is wrong and it will then interact with other nodes. On the other hand, if $C$ receives $\rho_C$ and $\tau$ has not been used to construct the relevant hashes, then
again $C$ will notice something is wrong.

\textbf{Auxiliary node} $A$ cannot modify the $ H_k$'s to construct its Merkle tree (and thus compute a valid $\mathtt{pub\_data}$), because $A$ cannot sign impersonating one of the service nodes.
However, $A$ may avoid to insert some of the $H_k$'s, effectively removing from the auxiliary tree all the transactions received by the clients in the chosen intervals. This DOS attack by $A$ is easily spotted by all nodes (and all clients) when the anchoring happens, because it will be impossible to reconclie the issued auxiliary receipts and the other receipts held by the clients.

\textbf{Certification Authority} We assume in our system that the CA is trusted, because if the CA were to issue certificates to malicious peers, and if it failed to revoke them, the system would be vulnerable to a majority attack. However, it could be that the CA itself is flooded by packets sent by DOS attackers. In this scenario, it may impossible for the system participants to check the validity of new data coming into the system, depending on the public keys held by each participant. Yet, assuming that honest peers will not validate transactions without a certificate revocation list being available, the validity of past transactions remains perfectly checkable by anyone having the relevant receipts (including Tom, if $C$ gives them to him), since they are enough to validate the data inserted in the public blockchain.

\section{Conclusions}
\label{sec: conclusions}

We have shown a two-tiered system of independent blockchains for secure timestamping that offers incremental levels of evidence to clients. We have examined under what assumptions the system may be deemed secure; in particular, we have seen that under the assumption of an honest certification authority, only denial of service attacks are feasible, and they are also immediately noticeable. The two-tiered system is designed to reduce the cost and increase efficiency of commitments to a slow and costly public blockchain, while at the same time still enabling clients to use their past evidence even if the intermediate blockchain solution were to cease being operational.
\\
While we are satisfied with our finding, we notice that our results hold in a blockchain having an indefinite but supposedly robust consensus algorithm. It would be interesting to investigate how our system could be effectively integrated in a blockchain enjoying a specific consensus algorithm, such as proof-of-work or proof-of-stake.

\section*{Acknowledgments}
Some partial results of this paper have been presented at the Euregio Blockchain Conference in 2018. The more advanced results presented here have been funded by the project MIUR PON ``Distributed Ledgers for Secure Open Communities".

\bibliographystyle{plain}
\bibliography{Refs}

\end{document}